\documentclass[12pt,fleqn]{article}
\pdfoutput=1
\usepackage{amsmath}
\usepackage{amsthm}
\usepackage{amsthm}
\usepackage{amsfonts}
\usepackage{amssymb}
\usepackage{amscd}
\usepackage{mathtools}
\usepackage{fullpage}
\usepackage{color}
\usepackage{times}
\usepackage{graphicx}

\newtheorem{prop}{Proposition}
\newtheorem{lemma}{Lemma}
\newtheorem{corollary}{Corollary}
\hyphenation{an-i-so-tro-pic}
\setlength{\mathindent}{0pt}
\begin{document}
\renewcommand{\thefootnote}{\fnsymbol{footnote}}
\title{On commutativity of Backus and Gazis averages}
\author{
David R. Dalton\footnote{
Department of Earth Sciences, Memorial University of Newfoundland,
{\tt dalton.nfld@gmail.com}},
Michael A. Slawinski \footnote{
Department of Earth Sciences, Memorial University of Newfoundland,
{\tt mslawins@mac.com}}
}
\date{January 12, 2016}
\maketitle
\renewcommand{\thefootnote}{\arabic{footnote}}
\setcounter{footnote}{0}
\section*{Abstract}
We show that the Backus (1962) equivalent-medium average, which is an average over a spatial variable, and the Gazis et al. (1963) effective-medium average, which is an average over a symmetry group, do not commute, in general.
They commute in special cases, which we exemplify.
\section{Introduction}
Hookean solids are defined by their mechanical property relating linearly the stress tensor,~$\sigma$\,, and the strain tensor,~$\varepsilon$\,,
\begin{equation*}
\sigma_{ij}=\sum_{k=1}^3\sum_{\ell=1}^3c_{ijk\ell}\varepsilon_{k\ell}\,,\qquad i,j=1,2,3
\,.
\end{equation*}
The elasticity tensor,~$c$\,, belongs to one of eight material-symmetry classes shown in Figure~\ref{fig:orderrelation}.
\begin{figure}
\begin{center}
\includegraphics[scale=0.7]{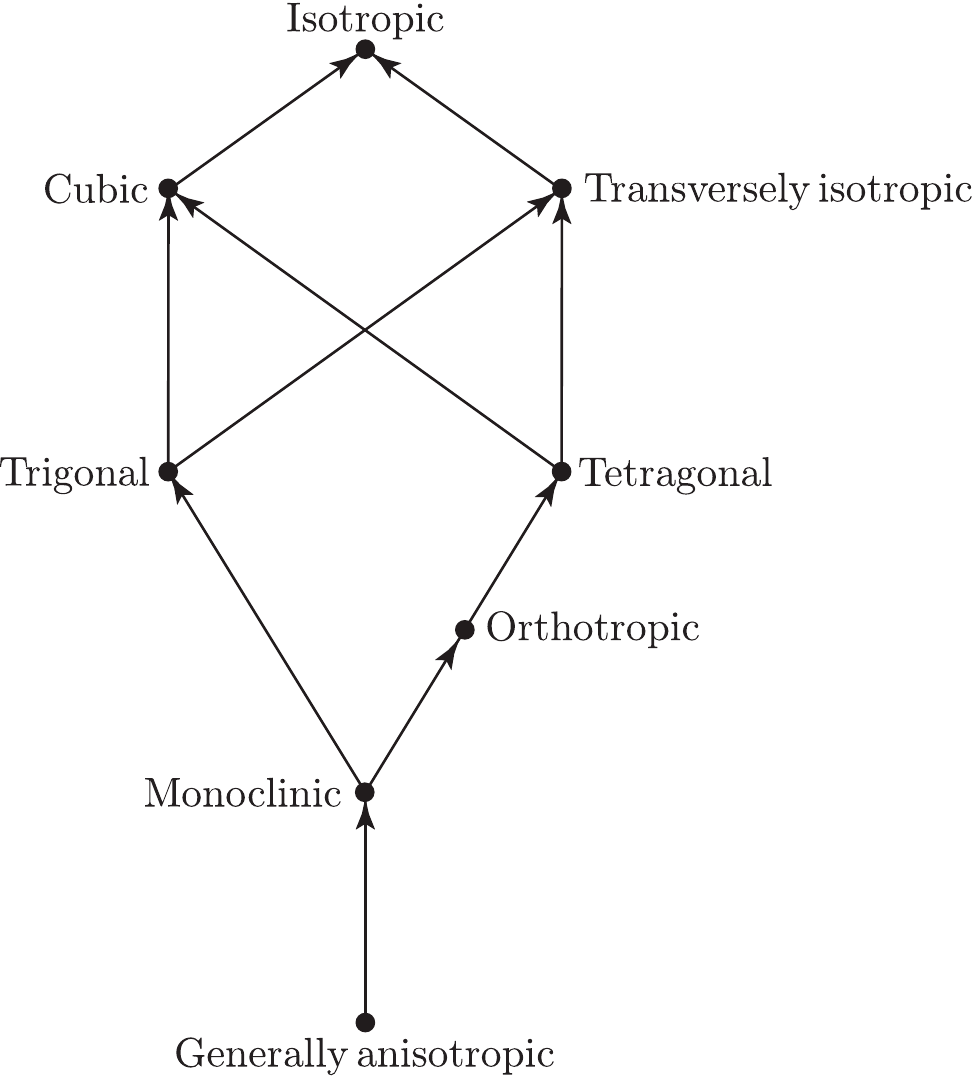}
\end{center}
\caption{\small{Order relation of material-symmetry classes of elasticity tensors:
Arrows indicate subgroups in this partial ordering. For instance, monoclinic is a subgroup of all nontrivial symmetries, in particular, of both orthotropic and trigonal, but orthotropic is not a subgroup of trigonal or {\it vice-versa}.}}
\label{fig:orderrelation}
\end{figure}

The Backus (1962) moving average allows us to quantify the response of a wave propagating through a series of parallel layers whose thicknesses are much smaller than the wavelength.
Each layer is a Hookean solid exhibiting a given material symmetry with given elasticity parameters.
The average is a Hookean solid whose elasticity parameters---and, hence, its material symmetry---allow us to model a long-wavelength response.
This material symmetry of the resulting medium, to which we refer as {\sl equivalent}, is a consequence of symmetries exhibited by the averaged layers.

The long-wave-equivalent medium to a stack of isotropic or transversely isotropic layers
with thicknesses much less than the signal wavelength was
shown by Backus (1962) to be a homogeneous or nearly homogeneous transversely
isotropic medium, where a {\it nearly\/} homogeneous medium is a consequence of a {\it moving\/} average.
Backus (1962) formulation is reviewed by Slawinski (2016) and Bos et al.\ (2016),
where formulations for generally anisotropic, monoclinic, and orthotropic thin layers
are also derived.   Bos et al.\ (2016) examine the underlying assumptions and
approximations behind the Backus (1962) formulation, which is derived by
expressing rapidly varying stresses and strains in terms of products of
algebraic combinations of rapidly varying elasticity parameters with
slowly varying stresses and strains.
The only mathematical approximation in the formulation
is that the average of a product of a rapidly varying function and a slowly
varying function is approximately equal to the product of the averages of
the two functions.

According to Backus (1962), the average of $f(x_3)$ of ``width''~$\ell'$  is
\begin{equation}
\label{eq:BackusOne}
\overline f(x_3):=\int\limits_{-\infty}^\infty w(\zeta-x_3)f(\zeta)\,{\rm d}\zeta
\,,
\end{equation}
where $w(x_3)$ is the weight function with the following properties:
\begin{equation*}
w(x_3)\geqslant0\,,
\quad w(\pm\infty)=0\,,
\quad
\int\limits_{-\infty}^\infty w(x_3)\,{\rm d}x_3=1\,,
\quad
\int\limits_{-\infty}^\infty x_3w(x_3)\,{\rm d}x_3=0\,,
\quad
\int\limits_{-\infty}^\infty x_3^2w(x_3)\,{\rm d}x_3=(\ell')^2\,.
\end{equation*}
These properties define $w(x_3)$ as a probability-density function with mean~$0$ and standard deviation~$\ell'$\,, explaining the use of the term ``width'' for $\ell'$\,.

Gazis et al.~(1963) average allows us to obtain the closest symmetric counterpart---in the Frobenius sense---of a chosen material symmetry to a generally anisotropic Hookean solid.
The average is a Hookean solid, to which we refer as {\sl effective}, whose elasticity parameters correspond to the symmetry chosen {\it a priori}.

Gazis average is a projection given by
\begin{equation}
\widetilde c^{\,\,\rm sym}:=\intop_{G^{\rm sym}}(g\circ c)\,\mathrm{d}\mu(g)
\,,
\label{eq:proj}
\end{equation}
where the integration is over the symmetry group, $G^{\rm sym}$\,, whose elements are $g$\,, with respect to the invariant measure, $\mu$\,, normalized so that $\mu(G^{\rm sym})=1$\,; $\widetilde c^{\,\,\rm sym}$ is the orthogonal projection of $c$\,, in the sense of the Frobenius norm,  on the linear space containing all tensors of that symmetry, which are $ c^{\,\,\rm sym}$\,.
Integral~(\ref{eq:proj}) reduces to a finite sum for the classes whose symmetry groups are finite, which are all classes except isotropy and transverse isotropy. 

The Gazis et al.\  (1963) approach is reviewed and extended by Danek et al.\ (2013, 2015)
in the context of random errors.
Therein, elasticity tensors are not constrained to the same---or even different but known---orientation of the coordinate system.   

Concluding this introduction, let us emphasize that the fundamental distinction between the two averages is their domain of operation.
The Gazis et al.\ (1963) average is an average over symmetry groups at a point and the Backus (1962) average is a spatial average over a distance.
Both averages can be used, separately or together, in quantitative seismology.
Hence, an examination of their commutativity might provide us with an insight into their physical meaning and into allowable mathematical operations.
\section{Generally anisotropic layers and monoclinic medium}
Let us consider a stack of generally anisotropic layers to obtain a monoclinic medium.
To examine the commutativity between the Backus and Gazis averages, let us study the following diagram,
\begin{equation}
\label{eq:CD2}
\begin{CD}
\rm{aniso}@>\rm{B}>>\rm{aniso}\\
@V\mathrm{G}VV                         @VV\rm{G}V\\
\rm{mono}@>>\rm{B}>\rm{mono}
\end{CD}
\end{equation}
and Proposition~\ref{thm:One}, below, 
\begin{prop}
\label{thm:One}
In general, the Backus and Gazis averages do not commute.
\end{prop}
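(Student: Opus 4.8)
The plan is to prove Proposition~\ref{thm:One} by producing a single stack of generally anisotropic layers for which the two compositions in diagram~(\ref{eq:CD2}) yield different monoclinic tensors; one explicit counterexample suffices, since the assertion is only that commutativity fails \emph{in general}. It is harmless, and convenient, to take the layers homogeneous (constant elasticity parameters within each layer) and of equal thickness, so that the weighted average~(\ref{eq:BackusOne}) collapses to an ordinary arithmetic mean and the ``moving-average'' subtlety plays no role in the argument.

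The first step is to make the two operations explicit. Applying~(\ref{eq:proj}) to the two-element group generated by the reflection $x_3\mapsto-x_3$ shows that the Gazis average onto the monoclinic class (symmetry plane $x_1x_2$) is the \emph{linear} projection $\mathrm{G}$ that annihilates exactly the components $c_{ijk\ell}$ having an odd number of indices equal to $3$ --- in Voigt notation $c_{14},c_{15},c_{24},c_{25},c_{34},c_{35},c_{46},c_{56}$ --- and fixes the rest. For $\mathrm{B}$ I would invoke the generally-anisotropic Backus average reviewed in Slawinski (2016) and Bos et al.\ (2016): splitting the Voigt indices into $\{3,4,5\}$ (those ``involving $x_3$'') and $\{1,2,6\}$ and writing the stiffness matrix in the block form $\left(\begin{smallmatrix}C_{\rm NN}&C_{\rm NT}\\ C_{\rm TN}&C_{\rm TT}\end{smallmatrix}\right)$, one has, among other formulas,
\begin{equation*}
C_{\rm TT}^{\,\rm eff}=\overline{\bigl(C_{\rm TT}\bigr)^{-1}}^{-1}
\,,
\end{equation*}
together with Schur-complement-type expressions for $C_{\rm NT}^{\,\rm eff}$ and $C_{\rm NN}^{\,\rm eff}$. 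Everything turns on the fact that these formulas are \emph{nonlinear} (rational) in the layer parameters while $\mathrm{G}$ is linear, and, more precisely, on the fact that the entries $c_{34},c_{35}$ killed by $\mathrm{G}$ lie \emph{inside} the block $C_{\rm TT}$ that $\mathrm{B}$ inverts, so that the matrix inversion feeds $c_{34},c_{35}$ into the $\mathrm{G}$-invariant entry $c_{33}$.

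I would then track the single component $\overline c_{3333}$ around both routes. Along $\mathrm{G}\circ\mathrm{B}$ (Backus first) it is the $(3,3)$-entry of $\overline{(C_{\rm TT})^{-1}}^{-1}$ built from the \emph{full} $C_{\rm TT}$ of each layer, and the subsequent $\mathrm{G}$ does not touch it. Along $\mathrm{B}\circ\mathrm{G}$ (Gazis first) each layer's $C_{\rm TT}$ has $c_{34}$ and $c_{35}$ set to zero \emph{before} the reciprocal-average is taken, so the resulting $(3,3)$-entry is generically different, inversion of a symmetric $3\times3$ matrix depending on its off-diagonal entries. Concretely I would take two layers that differ only in $c_{3323}$ --- say, in suitable units and with all unlisted independent components chosen so that each $6\times6$ stiffness matrix is positive definite (for instance, equal to the identity apart from this one entry), layer~$1$ with $c_{3333}=c_{2323}=1$, $c_{3323}=0$ and layer~$2$ with $c_{3333}=c_{2323}=1$, $c_{3323}=\tfrac12$. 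After $\mathrm{G}$ the two layers coincide, so $\mathrm{B}\circ\mathrm{G}$ returns that common tensor and gives $\overline c_{3333}=1/\overline{(1/c_{3333})}=1$; whereas $\mathrm{G}\circ\mathrm{B}$ must first Backus-average the two distinct tensors, and a short computation gives $\overline c_{3333}$ equal to the $(1,1)$-entry of $\bigl(\tfrac12 I+\tfrac23\left(\begin{smallmatrix}1&-1/2\\-1/2&1\end{smallmatrix}\right)\bigr)^{-1}=\tfrac{14}{15}\neq1$. The two compositions disagree already in this component, so they do not commute.

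The only genuine work I anticipate is presentational: exhibiting two fully explicit $6\times6$ stiffness matrices that are visibly positive definite (hence legitimate Hookean solids, as the statement requires), fixing the Voigt normalisation consistently so that the block inverse means what it should, and carrying the arithmetic along both routes. There is no conceptual obstacle, because the non-commutativity is structural: $\mathrm{G}$ discards precisely the components ($c_{34},c_{35}$, and likewise $c_{46},c_{56}$) that the Backus inversion mixes into the components $\mathrm{G}$ retains, so the failure is robust rather than a fluke of the chosen numbers, and the same mechanism would supply counterexamples for analogous diagrams targeting other symmetry classes.
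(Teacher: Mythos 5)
Your proposal is correct, and in outline it follows the paper's own route: the same diagram~(\ref{eq:CD2}), the same identification of the monoclinic Gazis average as the linear projection that annihilates exactly the components with an odd number of indices equal to $3$ (this is the content of Lemma~\ref{lem:Mono} and Corollary~\ref{col:Mono}, proved in the appendix by averaging over the two-element group), and the same block-inverse form of the generally anisotropic Backus average imported from Bos et al.\ (2016). Where you genuinely depart from the paper is at the decisive final step. The paper compares the two paths symbolically and simply asserts that the clockwise result, being ``too complicated to display explicitly,'' is in general not equal to the counterclockwise one; the concrete substantiation is effectively deferred to the higher-symmetry cases of Section~3, where the discrepant formulas can be written out. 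You instead produce an explicit two-layer counterexample and track the single component $c_{3333}$, which the monoclinic projection leaves untouched: because the effective $\{33,23,13\}$ block is the inverse of the average of the layerwise inverses, and your layers have $c_{3313}=c_{2313}=0$, the computation closes up in a $2\times2$ sub-block and yields $14/15$ along one path versus $1$ along the other; I have checked the arithmetic (the averaged inverse block has diagonal entries $7/6$ and off-diagonal entries $-1/3$, and its inverse has $(1,1)$-entry $14/15$), and the proposed stiffness matrices are positive definite. What your version buys is a self-contained, verifiable proof of Proposition~\ref{thm:One} that does not lean on an unquantified ``in general''; the only cost is the bookkeeping you already flag, namely fixing the Voigt--Kelvin normalization consistently (the paper works in Kelvin form, where the $(3,4)$ entry is $\sqrt{2}\,c_{3323}$ and the $(4,4)$ entry is $2c_{2323}$), which is harmless here because the disagreement occurs in $c_{3333}$, a component carrying no such factor. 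Your closing structural remark---that the Gazis projection discards precisely the entries that the Backus matrix inversion mixes into the retained ones---is also the correct explanation of why the paper's special commuting cases in Sections~\ref{sec:mono} and~\ref{sec:ortho} arise exactly when those discarded entries vanish.
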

\begin{proof}
To prove this proposition and in view of Diagram~\ref{eq:CD2}, let us begin with the following corollary.
\begin{corollary}
For the generally anisotropic and monoclinic symmetries, the Backus and Gazis averages do not commute.
\end{corollary}
\noindent To understand this corollary, we invoke the following lemma, whose proof is in \ref{AppOne1}.
\begin{lemma}
\label{lem:Mono}
For the effective monoclinic symmetry, the result of the Gazis average is tantamount to replacing each $c_{ijk\ell}$\,, in a generally anisotropic tensor, by its corresponding $c_{ijk\ell}$ of the monoclinic tensor, expressed in the natural coordinate system, including replacements of the anisotropic-tensor components by the zeros of the corresponding monoclinic components. 
\end{lemma}
\noindent Let us first examine the counterclockwise path of Diagram~\ref{eq:CD2}.
Lemma~\ref{lem:Mono} entails a corollary.
\begin{corollary}
\label{col:Mono}
For the effective monoclinic symmetry, given a generally anisotropic tensor,~$C$\,, 
\begin{equation}
\label{eq:GazisMono}
\widetilde{C}^{\,\rm mono}=C^{\,\rm mono}
\,;
\end{equation} 
where $\widetilde{C}^{\,\rm mono}$ is the Gazis average of~$C$\,, and $C^{\,\rm mono}$ is a monoclinic tensor whose nonzero entries are the same as for~$C$\,.
\end{corollary}
\noindent According to Corollary~\ref{col:Mono}, the effective monoclinic
tensor is obtained simply by setting  to zero---in the generally anisotropic tensor---the
components that are zero for the monoclinic tensor. 
Then, the second counterclockwise branch of Diagram~\ref{eq:CD2} is performed as follows.
Applying the Backus average, we obtain (Bos et al., 2015)
\begin{equation*}
\langle c_{3333}\rangle=\overline{\left(\frac{1}{c_{3333}}\right)}^{\,\,-1}\,,
\qquad
\langle c_{2323}\rangle=\frac{\overline{\left(\frac{c_{2323}}{D}\right)}}{2D_2}\,,
\end{equation*} 
\begin{equation*}
\langle c_{1313}\rangle=\frac{\overline{\left(\frac{c_{1313}}{D}\right)}}{2D_2}\,,
\qquad
\langle c_{2313}\rangle=\frac{\overline{\left(\frac{c_{2313}}{D}\right)}}{2D_2}\,,
\end{equation*}
where $D\equiv 2(c_{2323}c_{1313}-c_{2313}^2)$ and
$D_2\equiv (\overline{c_{1313}/D})(\overline{c_{2323}/D})-(\overline{c_{2313}/D})^2$\,.
We also obtain
\begin{equation*}
\langle c_{1133}\rangle=
\overline{\left(\frac{1}{c_{3333}}\right)}^{\,\,-1}
\overline{\left(\frac{c_{1133}}{c_{3333}}\right)}\,,
\quad
\langle c_{2233}\rangle=
\overline{\left(\frac{1}{c_{3333}}\right)}^{\,\,-1}
\overline{\left(\frac{c_{2233}}{c_{3333}}\right)}\,,
\quad
\langle c_{3312}\rangle=
\overline{\left(\frac{1}{c_{3333}}\right)}^{\,\,-1}
\overline{\left(\frac{c_{3312}}{c_{3333}}\right)}\,,
\end{equation*}
\begin{equation*}
\langle c_{1111}\rangle=
\overline{c_{1111}}-\overline{\left(\frac{c_{1133}^2}{c_{3333}}\right)}+
\overline{\left(\frac{1}{c_{3333}}\right)}^{\,\,-1}
\overline{\left(\frac{c_{1133}}{c_{3333}}\right)}^{\,2}\,,
\end{equation*}
\begin{equation*}
\langle c_{1122}\rangle=
\overline{c_{1122}}-\overline{\left(\frac{c_{1133}\,c_{2233}}{c_{3333}}\right)}+
\overline{\left(\frac{1}{c_{3333}}\right)}^{\,\,-1}
\overline{\left(\frac{c_{1133}}{c_{3333}}\right)}\,\,
\overline{\left(\frac{c_{2233}}{c_{3333}}\right)}\,,
\end{equation*}
\begin{equation*}
\langle c_{2222}\rangle=
\overline{c_{2222}}-\overline{\left(\frac{c_{2233}^2}{c_{3333}}\right)}+
\overline{\left(\frac{1}{c_{3333}}\right)}^{\,\,-1}
\overline{\left(\frac{c_{2233}}{c_{3333}}\right)}^{\,2}\,,
\end{equation*}
\begin{equation*}
\langle c_{1212}\rangle=
\overline{c_{1212}}-\overline{\left(\frac{c_{3312}^2}{c_{3333}}\right)}+
\overline{\left(\frac{1}{c_{3333}}\right)}^{\,\,-1}
\overline{\left(\frac{c_{3312}}{c_{3333}}\right)}^{\,2}\,,
\end{equation*}
\begin{equation*}
\langle c_{1112}\rangle=
\overline{c_{1112}}-\overline{\left(\frac{c_{3312}\,c_{1133}}{c_{3333}}\right)}+
\overline{\left(\frac{1}{c_{3333}}\right)}^{\,\,-1}
\overline{\left(\frac{c_{1133}}{c_{3333}}\right)}\,\,
\overline{\left(\frac{c_{3312}}{c_{3333}}\right)}
\end{equation*}
and
\begin{equation*}
\langle c_{2212}\rangle=
\overline{c_{2212}}-\overline{\left(\frac{c_{3312}\,c_{2233}}{c_{3333}}\right)}+
\overline{\left(\frac{1}{c_{3333}}\right)}^{\,\,-1}
\overline{\left(\frac{c_{2233}}{c_{3333}}\right)}\,\,
\overline{\left(\frac{c_{3312}}{c_{3333}}\right)}\,,
\end{equation*}
where angle brackets denote the equivalent-medium elasticity parameters.
The other equivalent-medium elasticity parameters are zero.

Following the clockwise path of Diagram~\ref{eq:CD2},
the upper branch is derived in matrix form in Bos et al.\ (2015).
Then, from Bos et al. (2015) the result of the right-hand branch is
derived by setting entries in the generally anisotropic tensor 
that are zero for the monoclinic tensor to zero.
The nonzero entries, which are too complicated to display explicitly, are---in general---not the same as the result of the counterclockwise path.
Hence, for generally anisotropic and monoclinic symmetries, the Backus and Gazis averages do not commute.
\end{proof}
\section{Higher symmetries}
\subsection{Monoclinic layers and orthotropic medium}
\label{sec:mono}
Proposition~\ref{thm:One} remains valid for layers exhibiting higher material symmetries, and simpler expressions of the corresponding elasticity tensors allow us to examine special cases that result in commutativity.
Let us consider the following corollary of Proposition~\ref{thm:One}.
\begin{corollary}
\label{thm:Two}
For the monoclinic and orthotropic symmetries, the Backus and Gazis averages do not commute.
\end{corollary}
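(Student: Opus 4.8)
The plan is to run the argument of Proposition~\ref{thm:One} one level up in the symmetry hierarchy, now with monoclinic layers and an effective orthotropic medium, so the relevant object is the analogue of Diagram~\ref{eq:CD2} with $\mathrm{aniso}$ replaced by $\mathrm{mono}$ and $\mathrm{mono}$ replaced by $\mathrm{ortho}$. First I would establish the orthotropic counterpart of Lemma~\ref{lem:Mono}: for the effective orthotropic symmetry, the Gazis average acts on a monoclinic tensor (in its natural coordinate system) simply by setting to zero the entries that vanish for orthotropic symmetry, namely $c_{2313}$, $c_{3312}$, $c_{1112}$, and $c_{2212}$ (that is, $c_{45}$, $c_{36}$, $c_{16}$, $c_{26}$ in Voigt notation), while leaving the remaining nine entries untouched. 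This follows by the same computation as in the proof of Lemma~\ref{lem:Mono}, now integrating the rotated tensor over the orthotropic point group: each of its elements is a reflection across a coordinate plane, hence fixes a component precisely when every index value occurs an even number of times in it and sends the others to their negatives, so averaging annihilates exactly the four listed components and mixes nothing.

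Granting this, the counterclockwise path of the diagram is: apply Gazis first, deleting $c_{2313},c_{3312},c_{1112},c_{2212}$ in every layer, which leaves orthotropic layers; then apply Backus, which for orthotropic layers is obtained from the monoclinic Backus formulas displayed in the proof of Proposition~\ref{thm:One} by setting each layer's $c_{2313}$ and $c_{3312}$ to zero. In particular the counterclockwise path gives
\[ \langle c_{1212}\rangle=\overline{c_{1212}}\,,\qquad \langle c_{2323}\rangle=\overline{\left(\frac{1}{c_{2323}}\right)}^{\,\,-1}\,,\qquad \langle c_{1313}\rangle=\overline{\left(\frac{1}{c_{1313}}\right)}^{\,\,-1}\,, \]
the last two because $c_{2313}\equiv0$ reduces $D$ to $2c_{2323}c_{1313}$ and $D_2$ to $\tfrac14\,\overline{1/c_{1313}}\;\overline{1/c_{2323}}$. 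The clockwise path is: apply Backus first to the monoclinic layers, producing the monoclinic equivalent medium with the parameters $\langle c_{ijk\ell}\rangle$ listed in the proof of Proposition~\ref{thm:One}; then apply Gazis, which by the lemma above zeroes only $\langle c_{2313}\rangle,\langle c_{3312}\rangle,\langle c_{1112}\rangle,\langle c_{2212}\rangle$ and leaves $\langle c_{1212}\rangle,\langle c_{2323}\rangle,\langle c_{1313}\rangle$ equal to their full monoclinic expressions, e.g.
\[ \langle c_{1212}\rangle=\overline{c_{1212}}-\overline{\left(\frac{c_{3312}^2}{c_{3333}}\right)}+\overline{\left(\frac{1}{c_{3333}}\right)}^{\,\,-1}\overline{\left(\frac{c_{3312}}{c_{3333}}\right)}^{\,2}\,, \]
which genuinely involves the layer parameter $c_{3312}$, and likewise $\langle c_{2323}\rangle,\langle c_{1313}\rangle$ retain their dependence on $c_{2313}$ through $D$ and $D_2$.

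Comparing the two outputs, $\langle c_{1212}\rangle$ (and, separately, the shear pair $\langle c_{2323}\rangle,\langle c_{1313}\rangle$) differ in general: along the counterclockwise path the monoclinic-only layer components are deleted before the nonlinear Backus operations, whereas along the clockwise path they are carried through those operations and deleted only afterwards, and deletion does not commute with the Backus averaging (harmonic-type averaging of the $2\times2$ shear block; the $c_{3312}$-correction to $c_{1212}$). I would close by exhibiting two explicit monoclinic layers with $c_{3312}\ne0$ for which the two paths yield different $\langle c_{1212}\rangle$, which settles non-commutativity; I would also note that the discrepancy collapses when these monoclinic-only layer components vanish, or are constant in a common natural frame, which is what points toward the commuting special cases treated in the following subsections.

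The main obstacle is presentational rather than conceptual: most equivalent-medium parameters (the $\langle c_{1111}\rangle$-type entries, whose Backus formulas involve only components that survive the orthotropic projection) are identical along the two paths, and the few that are not are, as in Proposition~\ref{thm:One}, too unwieldy to compare symbolically in full generality, so the proof must isolate one transparent witness — here $\langle c_{1212}\rangle$, whose two forms are already available from the proof of Proposition~\ref{thm:One} — and then confirm the strict inequality with a concrete numerical example. The one genuinely new ingredient requiring care is the orthotropic analogue of Lemma~\ref{lem:Mono}, i.e.\ checking that the Gazis projection onto orthotropy acts on a monoclinic tensor merely by annihilating $c_{2313},c_{3312},c_{1112},c_{2212}$ with no mixing; this is routine given the monoclinic case but must be stated explicitly.
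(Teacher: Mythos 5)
Your proposal follows essentially the same route as the paper: the same diagram with monoclinic layers and an effective orthotropic medium, the same key lemma that the Gazis projection onto orthotropy merely annihilates $c_{2313}$, $c_{3312}$, $c_{1112}$, $c_{2212}$ (which the paper proves in \ref{AppOne2} by explicit matrix computation rather than your parity argument), and the same comparison of $\langle c_{1212}\rangle$, $\langle c_{1313}\rangle$, $\langle c_{2323}\rangle$ along the two paths, concluding that they agree only when $c_{3312}=c_{2313}=0$. Your added suggestion of a concrete numerical witness goes slightly beyond the paper, which simply asserts the symbolic expressions differ in general, but the argument is the same.
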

\noindent To study this corollary, let us consider the following diagram,
\begin{equation}
\label{eq:CD}
\begin{CD}
\rm{mono}@>\rm{B}>>\rm{mono}\\
@V\mathrm{G}VV                         @VV\rm{G}V\\
\rm{ortho}@>>\rm{B}>\rm{ortho}
\end{CD}
\end{equation}
and the lemma, whose proof is in \ref{AppOne2}.
\begin{lemma}
\label{lem:Ortho}
For the effective orthotropic symmetry, the result of the Gazis average is tantamount to replacing each $c_{ijk\ell}$\,, in a generally anisotropic---or monoclinic---tensor, by its corresponding $c_{ijk\ell}$ of the orthotropic tensor, expressed in the natural coordinate system, including the replacements by the corresponding zeros. 
\end{lemma}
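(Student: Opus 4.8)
The plan is to evaluate the Gazis average~$(\ref{eq:proj})$ for the choice $G^{\rm sym}=G^{\rm ortho}$ directly and to read off the result entry by entry. First I would note that each reflection across a coordinate plane acts on a fourth-rank tensor in the same way as the $\pi$-rotation about the perpendicular axis, the two transformations differing by $-\mathbf I$, which is trivial on tensors of even rank; hence, in its action on $c$, the orthotropic point group of order eight reduces to the Klein four-group
\begin{equation*}
g_0=\mathrm{diag}(1,1,1),\quad g_1=\mathrm{diag}(1,-1,-1),\quad g_2=\mathrm{diag}(-1,1,-1),\quad g_3=\mathrm{diag}(-1,-1,1)\,.
\end{equation*}
Since $G^{\rm ortho}$ is finite, the normalized invariant measure in~$(\ref{eq:proj})$ is counting measure divided by four, so that $\widetilde c^{\,\rm ortho}=\tfrac14\sum_{n=0}^{3}g_n\circ c$.

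Next I would substitute the transformation rule for a diagonal orthogonal map. Writing $g_n=\mathrm{diag}\big(\epsilon^{(n)}_1,\epsilon^{(n)}_2,\epsilon^{(n)}_3\big)$ with each $\epsilon^{(n)}_m\in\{-1,1\}$, one has $(g_n\circ c)_{ijk\ell}=\epsilon^{(n)}_i\,\epsilon^{(n)}_j\,\epsilon^{(n)}_k\,\epsilon^{(n)}_\ell\,c_{ijk\ell}$ with no summation, whence
\begin{equation*}
\widetilde c^{\,\rm ortho}_{ijk\ell}=\Bigg(\frac14\sum_{n=0}^{3}\epsilon^{(n)}_i\,\epsilon^{(n)}_j\,\epsilon^{(n)}_k\,\epsilon^{(n)}_\ell\Bigg)\,c_{ijk\ell}\,.
\end{equation*}
The bracketed scalar depends only on the multiplicities with which the values $1,2,3$ appear among the indices $i,j,k,\ell$, and a short case check shows that it equals $1$ when each of those three values appears an even number of times and equals $0$ otherwise. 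The decisive step is then the elementary identification of the index quadruples with all-even multiplicities with exactly those labelling the entries that are nonzero for an orthotropic tensor in the natural coordinate system --- in two-index notation $c_{11},c_{22},c_{33},c_{23},c_{13},c_{12},c_{44},c_{55},c_{66}$ together with their index-symmetric copies --- whereas entries such as $c_{1123}$ or $c_{1112}$, which are precisely the ones present only in monoclinic or generally anisotropic tensors, carry an odd multiplicity on some axis and are therefore annihilated.

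Finally I would observe that the computation nowhere used the vanishing of any component of the input tensor, so the conclusion holds verbatim when the starting tensor is monoclinic rather than generally anisotropic, provided its mirror plane is taken to be one of the coordinate planes so that the two natural coordinate systems are compatible: the Gazis average reproduces every orthotropic-type entry unchanged and sends all the remaining entries to zero, which is the assertion of the lemma. I expect the only delicate point to be the bookkeeping --- keeping the coordinate systems aligned and translating between the tensor indices $ijk\ell$ and their two-index counterparts when listing the surviving entries --- rather than any analytic obstacle; it is disposed of by running through the finitely many index patterns.
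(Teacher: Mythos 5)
Your proposal is correct and follows essentially the same route as the paper: both evaluate the Gazis average as the normalized sum over the same four group elements $\mathrm{diag}(\pm1,\pm1,\pm1)\in SO(3)$ (the paper's $A_1^{\rm ortho},\ldots,A_4^{\rm ortho}$), and both conclude that exactly the orthotropic-type entries survive unchanged while the rest are annihilated. The only difference is in bookkeeping---you carry out the average in tensor-index notation via the sign-parity factor $\tfrac14\sum_n\epsilon^{(n)}_i\epsilon^{(n)}_j\epsilon^{(n)}_k\epsilon^{(n)}_\ell$, whereas the paper multiplies out the corresponding $6\times6$ Kelvin-notation matrices $\tilde A_n^{\rm ortho}\,C\,\tilde A_n^{\rm ortho}{}^{T}$ explicitly---and your parity check correctly reproduces the paper's result~(\ref{eq:OrthoExplicitRef}).
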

\noindent Lemma~\ref{lem:Ortho} entails a corollary.
\begin{corollary}
\label{col:Ortho}
For the effective orthotropic symmetry, given a generally anisotropic---or monoclinic---tensor,~$C$\,, 
\begin{equation}
\label{eq:GazisOrtho}
\widetilde{C}^{\,\rm ortho}=C^{\,\rm ortho}
\,.
\end{equation}
where $\widetilde{C}^{\,\rm ortho}$ is the Gazis average of~$C$\,, and $C^{\,\rm ortho}$ is an orthotropic tensor whose nonzero entries are the same as for~$C$\,.
\end{corollary}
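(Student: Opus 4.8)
The plan is to obtain Corollary~\ref{col:Ortho} directly from Lemma~\ref{lem:Ortho}, in exactly the way Corollary~\ref{col:Mono} follows from Lemma~\ref{lem:Mono}. Lemma~\ref{lem:Ortho} asserts that, in the natural coordinate system, the Gazis average for the effective orthotropic symmetry acts entry by entry: each $c_{ijk\ell}$ of the generally anisotropic---or monoclinic---tensor $C$ is replaced by the corresponding $c_{ijk\ell}$ of the orthotropic tensor, including the replacements by zeros. So the only remaining task is to spell out what that ``corresponding orthotropic component'' is, slot by slot.

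First I would fix the coordinate convention so that the three mutually orthogonal symmetry planes of the orthotropic counterpart are the coordinate planes, with the $x_1x_2$-plane being the single mirror plane already shared with the monoclinic tensor and $x_3$ the layering normal. In this orientation the linear space of orthotropic tensors is a coordinate subspace of the space of elasticity tensors: it is spanned by the basis tensors carrying the nine independent entries $c_{1111}$, $c_{1122}$, $c_{1133}$, $c_{2222}$, $c_{2233}$, $c_{3333}$, $c_{2323}$, $c_{1313}$, $c_{1212}$, while every other entry vanishes identically. Hence the ``corresponding orthotropic component'' of any one of these nine slots is the slot itself---it is a free parameter of the class, left untouched by the Gazis average---whereas the ``corresponding orthotropic component'' of every other slot is $0$.

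Combining the two observations, $\widetilde{C}^{\,\rm ortho}$ coincides with $C$ on the nine orthotropic slots listed above and is zero on all others, which is precisely the tensor $C^{\,\rm ortho}$ of the statement; therefore $\widetilde{C}^{\,\rm ortho}=C^{\,\rm ortho}$. I do not expect a real obstacle here beyond Lemma~\ref{lem:Ortho} itself: the only point requiring care is the bookkeeping of which components survive, i.e. checking that in the chosen natural coordinate system the fixed-point set of the orthotropic symmetry group is exactly the coordinate subspace just described, so that the Frobenius-orthogonal projection performed by the Gazis average amounts to retaining those entries and annihilating the rest. Once that is verified, the conclusion is immediate and parallels Corollary~\ref{col:Mono}.
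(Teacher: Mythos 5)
Your proposal is correct and follows essentially the same route as the paper: the corollary is read off directly from Lemma~\ref{lem:Ortho}, with the Gazis average retaining the nine orthotropic entries $c_{1111}$, $c_{1122}$, $c_{1133}$, $c_{2222}$, $c_{2233}$, $c_{3333}$, $c_{2323}$, $c_{1313}$, $c_{1212}$ and annihilating the rest, exactly as the paper's explicit group-sum computation in \ref{AppOne2} (equation~(\ref{eq:OrthoExplicitRef})) confirms. The verification you flag as the ``only point requiring care'' is precisely what that appendix calculation carries out.
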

\noindent Let us consider a monoclinic tensor and proceed counterclockwise along the first branch of Diagram~\ref{eq:CD}.
Using the fact that the monoclinic symmetry is a special case of general anisotropy, we invoke Corollary~\ref{col:Ortho} to conclude that $\widetilde{C}^{\,\rm ortho}=C^{\,\rm ortho}$\,,
which is equivalent to setting $c_{1112}$\,, $c_{2212}$\,, $c_{3312}$ and $c_{2313}$ to zero
in the monoclinic tensor.
We perform the upper branch of Diagram~\ref{eq:CD}, which is the averaging of a stack of monoclinic layers to get a monoclinic equivalent medium, as in the case of the lower branch of Diagram~\ref{eq:CD2}.
Thus, following the clockwise path, we obtain
\begin{equation*}
c_{1212}^\circlearrowright=
\overline{c_{1212}}-\overline{\left(\frac{c_{3312}^2}{c_{3333}}\right)}+
\overline{\left(\frac{1}{c_{3333}}\right)}^{\,\,-1}
\overline{\left(\frac{c_{3312}}{c_{3333}}\right)}^{\,2}\,,
\end{equation*}
\begin{equation*}
c_{1313}^\circlearrowright=\overline{\left(\frac{c_{1313}}{D}\right)}/(2D_2)\,,\qquad
c_{2323}^\circlearrowright=\overline{\left(\frac{c_{2323}}{D}\right)}/(2D_2)
\end{equation*}
Following the counterclockwise path, we obtain
\begin{equation*}
c_{1212}^\circlearrowleft=\overline{c_{1212}}\,,\quad
c_{1313}^\circlearrowleft=\overline{\left(\frac{1}{c_{1313}}\right)}^{\,\,-1}\,,\quad
c_{2323}^\circlearrowleft=\overline{\left(\frac{1}{c_{2323}}\right)}^{\,\,-1}\,.
\end{equation*}
The other entries are the same for both paths.

In conclusion, the results of the clockwise and  counterclockwise paths are the same if $c_{2313}=c_{3312}=0$\,, which is a special case of monoclinic symmetry.
Thus, the Backus average and Gazis average commute for that case, but not in general.
\subsection{Orthotropic layers and tetragonal medium}
\label{sec:ortho}
In a manner analogous to Diagram~\ref{eq:CD}, but proceeding from the the upper-left-hand corner orthotropic tensor to lower-right-hand corner tetragonal tensor by the counterclockwise path,
\begin{equation}
\label{eq:CD3}
\begin{CD}
\rm{ortho}@>\rm{B}>>\rm{ortho}\\
@V\mathrm{G}VV                         @VV\rm{G}V\\
\rm{tetra}@>>\rm{B}>\rm{tetra}
\end{CD}
\end{equation}
we obtain
\begin{equation*}
c_{1111}^\circlearrowleft=\overline{\frac{c_{1111}+c_{2222}}{2}-
\frac{\left(\frac{c_{1111}+c_{2222}}{2}\right)^2}{c_{3333}}}+
\overline{\left(\frac{c_{1111}+c_{2222}}{2c_{3333}}\right)}^2
\overline{\left(\frac{1}{c_{3333}}\right)}^{\,\,-1}
\,.
\end{equation*}
Following the clockwise path, we obtain
\begin{equation*}
c_{1111}^\circlearrowright=\overline{\frac{c_{1111}+c_{2222}}{2}-
\frac{c_{1133}^2+c_{2233}^2}{2c_{3333}}}+
\frac{1}{2}\left[\overline{\left(\frac{c_{1133}}{c_{3333}}\right)}^2+
\overline{\left(\frac{c_{2233}}{c_{3333}}\right)}^2\right]
\overline{\left(\frac{1}{c_{3333}}\right)}^{\,\,-1}\,.
\end{equation*}
These results are not equal to one another, unless $c_{1133}=c_{2233}$\,, which is a special case of orthotropic symmetry.
Also $c_{2323}$ must equal $c_{1313}$ for $c_{2323}^\circlearrowright=c_{2323}^\circlearrowleft$.
The other entries are the same for both paths.
Thus, the Backus average and Gazis average do commute for $c_{1133}=c_{2233}$ and $c_{2323}=c_{1313}$\,, which is a special case of orthotropic symmetry, but not in general.

Let us also consider the case of monoclinic layers and a tetragonal medium to examine the process of combining the Gazis averages, which is tantamount to combining Diagrams~(\ref{eq:CD}) and~(\ref{eq:CD3}),
\begin{equation}
\begin{CD}
\label{eq:CD4}
\rm{mono}@>\rm{B}>>\rm{mono}\\
@V\mathrm{G}VV                         @VV\rm{G}V\\
\rm{ortho}@>>\rm{B}>\rm{ortho}\\
@V\mathrm{G}VV                         @VV\rm{G}V\\
\rm{tetra}@>>\rm{B}>\rm{tetra}
\end{CD}
\end{equation}
In accordance with Proposition~\ref{thm:One}, there is---in general---no commutativity.
However, the outcomes are the same as for the corresponding steps in Sections~\ref{sec:mono} and  \ref{sec:ortho}.
In general, for the Gazis average, proceeding directly, $\rm{aniso}\xrightarrow{\rm{G}}\rm{iso}$\,, is tantamount to proceeding along arrows in Figure~\ref{fig:orderrelation}, $\rm{aniso}\xrightarrow{\rm{G}}\cdots\xrightarrow{\rm{G}}\rm{iso}$\,.
No such combining of the Backus averages is possible, since, for each step, layers become a homogeneous medium.
\subsection{Transversely isotropic layers}
Lack of commutativity can also be exemplified by the case of transversely isotropic layers.
Following the clockwise path of Diagram~\ref{eq:CD}, the Backus average results in a transversely isotropic medium, whose Gazis average---in accordance with Figure~\ref{fig:orderrelation}---is isotropic.
Following the counterclockwise path, Gazis average results in an isotropic medium, whose Backus average, however, is transverse isotropy.
Thus, not only the elasticity parameters, but even the resulting material-symmetry classes differ.

Also, we could---in a manner analogous to the one illustrated in Diagram~\ref{eq:CD4}\,---begin with generally anisotropic layers and obtain isotropy by the clockwise path and transverse isotropy by the counterclockwise path, which again illustrates noncommutativity.
\section{Discussion}
Herein, we assume that all tensors are expressed in the same orientation of their coordinate systems. Otherwise, the process of averaging become more complicated, as discussed---for the Gazis average---by Kochetov and Slawinski (2009a, 2009b) and as mentioned---for the Backus average---by Bos et al. (2016).

Mathematically, the noncommutativity of two distinct averages is shown by Proposition~\ref{thm:One}, and exemplified for several material symmetries.

We do not see a physical justification for special cases in which---given the same orientation of coordinate systems---these averages commute.
This behaviour might support the view that a mathematical realm, which allows for fruitful analogies with the physical world, has no causal connection with it.
\section*{Acknowledgments}
We wish to acknowledge discussions with Theodore Stanoev. This research was performed in the context of The Geomechanics Project
supported by Husky Energy. Also, this research was partially supported by the
Natural Sciences and Engineering Research Council of Canada, grant 238416-2013.
\section*{References}
\frenchspacing
\newcommand{\hd}{\par\noindent\hangindent=0.4in\hangafter=1}
\hd
Backus, G.E.,  Long-wave elastic anisotropy produced by horizontal layering,
{\it  J. Geophys. Res.\/}, {\bf 67}, 11, 4427--4440, 1962.
\setlength{\parskip}{4pt}
\hd
B\'{o}na, A., I. Bucataru and M.A. Slawinski, Space of $SO(3)$-orbits of elasticity tensors, {\it  Archives of Mechanics\/}, {\bf 60}, 2, 121--136, 2008
\hd 
Bos, L, D.R. Dalton, M.A. Slawinski and T. Stanoev,
On Backus average for generally anisotropic layers, {\it arXiv\/}, 2016.
\hd
Chapman, C. H., {\it Fundamentals of seismic wave propagation\/},  Cambridge 
University Press, 2004.
\hd
Danek, T., M. Kochetov and M.A. Slawinski,  Uncertainty analysis of
effective elasticity tensors using quaternion-based global optimization and 
Monte-Carlo method, {\it The Quarterly Journal of Mechanics and Applied 
Mathematics\/}, {\bf 66}, 2, pp. 253--272, 2013.
\hd
Danek, T., M. Kochetov and M.A. Slawinski, Effective elasticity tensors 
in the context of random errors, {\it Journal of Elasticity\/}, 2015.
\hd
Gazis, D.C., I. Tadjbakhsh and R.A. Toupin, The elastic tensor of given symmetry nearest to an anisotropic elastic tensor, {\it Acta Crystallographica\/}, {\bf 16}, 9, 917--922, 1963.
\hd
Kochetov, M. and M.A. Slawinski, On obtaining effective orthotropic 
elasticity tensors, {\it The Quarterly Journal of Mechanics and Applied 
Mathematics\/}, {\bf 62}, 2, pp. 149-Ð166, 2009a.
\hd
Kochetov, M. and M.A. Slawinski,  On obtaining effective transversely 
isotropic elasticity tensors, {\it Journal of Elasticity\/}, {\bf 94}, 1Ð-13., 2009b.
\hd
Slawinski, M.A. {\it Wavefronts and rays in seismology: Answers to unasked questions\/},
World Scientific, 2016.
\hd
Slawinski, M.A., {\it Waves and rays in elastic continua\/}, World Scientific, 2015.
\hd
Thomson, W., {\it Mathematical and physical papers: Elasticity, heat, 
electromagnetism\/}, Cambridge University Press, 1890
\setcounter{section}{0}
\setlength{\parskip}{0pt}
\renewcommand{\thesection}{Appendix~\Alph{section}}
\section{}
\subsection{}\label{AppOne1}
Let us prove Lemma~\ref{lem:Mono}.
\begin{proof}
For discrete symmetries, we can write integral~(\ref{eq:proj}) as a sum,
\begin{equation}
\label{eq:AverageDisc}
\widetilde C^{\,\rm sym}=\frac{1}{n}\left(\tilde{A}_1^{\rm sym}\,C\,\tilde{A}_1^{\rm sym}\,{}^{^T}+\ldots+\tilde{A}_n^{\rm sym}\,C\,\tilde{A}_n^{\rm sym}\,{}^{^T}\right)
\,,
\end{equation}
where $\widetilde C^{\rm sym}$ is expressed in Kelvin's notation, in view
of  Thomson (1890, p.~110) as discussed in Chapman (2004, Section~4.4.2).

To write the elements of the monoclinic symmetry group as $6\times 6$ matrices, we must consider orthogonal transformations in $\mathbb{R}^3$\,.
Transformation $A\in SO(3)$ of $c_{ijk\ell}$ corresponds to transformation of $C$ given by
\begin{equation}
{\footnotesize
\tilde{A}=\left[\begin{array}{cccccc}
A_{11}^{2} & A_{12}^{2} & A_{13}^{2} & \sqrt{2}A_{12}A_{13} & \sqrt{2}A_{11}A_{13} & \sqrt{2}A_{11}A_{12}\\
A_{21}^{2} & A_{22}^{2} & A_{23}^{2} & \sqrt{2}A_{22}A_{23} & \sqrt{2}A_{21}A_{23} & \sqrt{2}A_{21}A_{22}\\
A_{31}^{2} & A_{32}^{2} & A_{33}^{2} & \sqrt{2}A_{32}A_{33} & \sqrt{2}A_{31}A_{33} & \sqrt{2}A_{31}A_{32}\\
\sqrt{2}A_{21}A_{31} & \sqrt{2}A_{22}A_{32} & \sqrt{2}A_{23}A_{33} & A_{23}A_{32}+A_{22}A_{33} & A_{23}A_{31}+A_{21}A_{33} & A_{22}A_{31}+A_{21}A_{32}\\
\sqrt{2}A_{11}A_{31} & \sqrt{2}A_{12}A_{32} & \sqrt{2}A_{13}A_{33} & A_{13}A_{32}+A_{12}A_{33} & A_{13}A_{31}+A_{11}A_{33} & A_{12}A_{31}+A_{11}A_{32}\\
\sqrt{2}A_{11}A_{21} & \sqrt{2}A_{12}A_{22} & \sqrt{2}A_{13}A_{23} & A_{13}A_{22}+A_{12}A_{23} & A_{13}A_{21}+A_{11}A_{23} & A_{12}A_{21}+A_{11}A_{22}\end{array}\right]}
\,,
\label{eq:ATildeQ}
\end{equation}
which is an orthogonal matrix, $\tilde{A}\in SO(6)$ (Slawinski (2015), Section~5.2.5).\footnote{Readers interested in formulation of matrix~(\ref{eq:ATildeQ}) might refer to B\'ona et al. (2008).}

The required symmetry-group elements are
\begin{equation*}
 A_1^{\rm mono}=
\left[
\begin{array}{ccc}
1 & 0 & 0\\
0 & 1 & 0\\
0 & 0 & 1\end{array}\right]
\mapsto
\left[\begin{array}{cccccc}
1 & 0 & 0 & 0 & 0 & 0\\
0 & 1 & 0 & 0 & 0 & 0\\
0 & 0 & 1 & 0 & 0 & 0\\
0 & 0 & 0 & 1 & 0 & 0\\
0 & 0 & 0 & 0 & 1 & 0\\
0 & 0 & 0 & 0 & 0 & 1\end{array}\right]
=\tilde{A}_1^{\rm mono}
\end{equation*}
\begin{equation*}
 A_2^{\rm mono}=
\left[
\begin{array}{ccc}
-1 & 0 & 0\\
0 & -1 & 0\\
0 & 0 & 1\end{array}\right]
\mapsto
\left[\begin{array}{cccccc}
1 & 0 & 0 & 0 & 0 & 0\\
0 & 1 & 0 & 0 & 0 & 0\\
0 & 0 & 1 & 0 & 0 & 0\\
0 & 0 & 0 & -1 & 0 & 0\\
0 & 0 & 0 & 0 & -1 & 0\\
0 & 0 & 0 & 0 & 0 & 1\end{array}\right]
=\tilde{A}_2^{\rm mono}
\,.
\end{equation*}
For the monoclinic\index{material symmetry!monoclinic} case, expression~(\ref{eq:AverageDisc}) can be stated explicitly as
\begin{equation*}
\widetilde C^{\rm mono}=
\frac{\left(\tilde{A}_1^{\rm mono}\right)\,C\,\left(\tilde{A}_1^{\rm mono}\right)^T+\left(\tilde{A}_2^{\rm mono}\right)\,C\,\left(\tilde{A}_2^{\rm mono}\right)^T}{2}
\,.
\end{equation*}
Performing matrix operations, we obtain
\begin{equation}
\widetilde C^{\rm mono}
=\left[\begin{array}{cccccc}
c_{1111} & c_{1122} & c_{1133} & 0 & 0 & \sqrt{2}c_{1112}\\
c_{1122} & c_{2222} & c_{2233} & 0 & 0 & \sqrt{2}c_{2212}\\
c_{1133} & c_{2233} & c_{3333} & 0 & 0 & \sqrt{2}c_{3312}\\
0 & 0 & 0 & 2c_{2323} & 2c_{2313} & 0\\
0 & 0 & 0 & 2c_{2313} & 2c_{1313} & 0\\
\sqrt{2}c_{1112} & \sqrt{2}c_{2212} & \sqrt{2}c_{3312} & 0 & 0 & 2c_{1212}
\end{array}\right]
\,,
\label{eq:MonoExplicitRef}
\end{equation}
which exhibits the form of the monoclinic tensor in its natural coordinate system.
In other words, $\widetilde{C}^{\rm mono}=C^{\rm mono}$\,, in accordance with Corollary~\ref{col:Mono}.
\subsection{}\label{AppOne2}
Let us prove Lemma~\ref{lem:Ortho}.

For orthotropic symmetry, $\tilde{A}_1^{\rm ortho}=\tilde{A}_1^{\rm mono}$ and
$\tilde{A}_2^{\rm ortho}=\tilde{A}_2^{\rm mono}$ and
\begin{equation*}
 A_3^{\rm ortho}=
\left[
\begin{array}{ccc}
-1 & 0 & 0\\
0 &  1 & 0\\
0 & 0 & -1\end{array}\right]
\mapsto
\left[\begin{array}{cccccc}
1 & 0 & 0 & 0 & 0 & 0\\
0 & 1 & 0 & 0 & 0 & 0\\
0 & 0 & 1 & 0 & 0 & 0\\
0 & 0 & 0 & -1 & 0 & 0\\
0 & 0 & 0 & 0 &  1 & 0\\
0 & 0 & 0 & 0 & 0 & -1\end{array}\right]
=\tilde{A}_3^{\rm ortho}
\,,
\end{equation*}
\begin{equation*}
 A_4^{\rm ortho}=
\left[
\begin{array}{ccc}
1 & 0 & 0\\
0 &  -1 & 0\\
0 & 0 & -1\end{array}\right]
\mapsto
\left[\begin{array}{cccccc}
1 & 0 & 0 & 0 & 0 & 0\\
0 & 1 & 0 & 0 & 0 & 0\\
0 & 0 & 1 & 0 & 0 & 0\\
0 & 0 & 0 & 1 & 0 & 0\\
0 & 0 & 0 & 0 &  -1 & 0\\
0 & 0 & 0 & 0 & 0 & -1\end{array}\right]
=\tilde{A}_4^{\rm ortho}
\,.
\end{equation*}
For the orthotropic\index{material symmetry!orthotropic} case, expression~(\ref{eq:AverageDisc}) can be stated explicitly as
{\footnotesize
\begin{equation*}
\widetilde C^{\rm ortho}=
\frac{\left(\tilde{A}_1^{\rm ortho}\right)\,C\,\left(\tilde{A}_1^{\rm ortho}\right)^T+\left(\tilde{A}_2^{\rm ortho}\right)\,C\,\left(\tilde{A}_2^{\rm ortho}\right)^T
+\left(\tilde{A}_3^{\rm ortho}\right)\,C\,\left(\tilde{A}_3^{\rm ortho}\right)^T+\left(\tilde{A}_4^{\rm ortho}\right)\,C\,\left(\tilde{A}_4^{\rm ortho}\right)^T
}
{4}
\,.
\end{equation*}}
Performing matrix operations, we obtain
\begin{equation}
\widetilde C^{\rm ortho}
=\left[\begin{array}{cccccc}
c_{1111} & c_{1122} & c_{1133} & 0 & 0 & 0\\
c_{1122} & c_{2222} & c_{2233} & 0 & 0 & 0\\
c_{1133} & c_{2233} & c_{3333} & 0 & 0 &0\\
0 & 0 & 0 & 2c_{2323} & 0 & 0\\
0 & 0 & 0 & 0 & 2c_{1313} & 0\\
0& 0 & 0 & 0 & 0 & 2c_{1212}
\end{array}\right]
\,,
\label{eq:OrthoExplicitRef}
\end{equation}
which exhibits the form of the orthotropic tensor in its natural coordinate system.
In other words, $\widetilde{C}^{\rm ortho}=C^{\rm ortho}$\,, in accordance with Corollary~\ref{col:Ortho}.
\end{proof}
\end{document}